\DeclareMathOperator*{\argmax}{arg\,max}
\newcommand{\Osymbol}{{\mathcal O}}
\newcommand{\BO}[1]{\Osymbol\left(#1\right)}
\newcommand{\TilO}[1]{\tilde{\Osymbol}\left(#1\right)}
\newcommand{\E}[1]{\textrm{\bf E}\left[#1\right]}
\renewcommand{\Pr}[1]{\textrm{\bf Pr}\left[#1\right]}
\newcommand{\sgn}{\texttt{sgn}}
\newcommand{\Rd} {\mathbb{R}^d}
\newcommand{\Rnd} {\mathbb{R}^{n \times d}}
\newcommand{\mX} {\mathbb{X}}
\newcommand{\bX} {\mathbf{X}}
\newcommand{\bQ} {\mathbf{Q}}
\newcommand{\bp} {\mathbf p}
\newcommand{\bq} {\mathbf q}
\newcommand{\bx} {\mathbf x}
\newcommand{\by} {\mathbf y}
\begin{document}

\title{Revisiting Wedge Sampling for Budgeted Maximum Inner Product Search \thanks{Partially supported by the Innovation Fund Denmark through the DABAI project.}
}

\toctitle{Revisiting Wedge Sampling for Budgeted Maximum Inner Product Search}
	
\author{Stephan S. Lorenzen\inst{1} \and
	Ninh Pham \Letter \, \inst{2}\orcidID{0000-0001-5768-9900} }

\authorrunning{S. Lorenzen and N. Pham}
\tocauthor{Stephan S. Lorenzen and Ninh Pham}
% First names are abbreviated in the running head.
% If there are more than two authors, 'et al.' is used.
%
\institute{
	University of Copenhagen, Copenhagen, Denmark \and
	University of Auckland, Auckland, New Zealand\\
	\email{lorenzen@di.ku.dk, ninh.pham@auckland.ac.nz}
}

%
%\titlerunning{Abbreviated paper title}
% If the paper title is too long for the running head, you can set
% an abbreviated paper title here
%

\maketitle              % typeset the header of the contribution

\setcounter{footnote}{0}
	
\begin{abstract}

Top-$k$ maximum inner product search (MIPS) is a central task in many machine learning applications. 
This work extends top-$k$ MIPS with a budgeted setting, that asks for the best approximate top-$k$ MIPS given a limited budget of computational operations.
We investigate recent advanced sampling algorithms, including wedge and diamond sampling, to solve budgeted top-$k$ MIPS.
%Our contribution is twofold.
First, we show that diamond sampling is essentially a combination of wedge sampling and basic sampling for top-$k$ MIPS.
Our theoretical analysis and empirical evaluation show that wedge is competitive (often superior) to diamond on approximating top-$k$ MIPS regarding both efficiency and accuracy.
Second, we propose \textit{dWedge}, a very simple \emph{deterministic} variant of wedge sampling for budgeted top-$k$ MIPS.
%Empirically, dWedge runs faster or provides higher accuracy than other competitive budgeted MIPS solvers while maintaining the accuracy at least 80\% on our real-world data sets.
Empirically, dWedge provides significantly higher accuracy than other budgeted top-$k$ MIPS solvers while maintaining a similar speedup. 
%Especially, dWedge returns the top-10 MIPS with at least 90\% accuracy with the speedup between 20x and 180x compared to the brute-force search on our large-scale data sets.

\keywords{Budgeted Maximum Inner Product Search \and	Sampling}
\end{abstract}

\section{Introduction}

Maximum inner product search (MIPS) is the task of, given a point set $\mX \subset \Rd$ of size $n$ and a query point $\bq \in \Rd$, finding the point $\bp \in \mX$ such that, $$\bp = \argmax_{\bx \in \mX}{\bx \cdot \bq} \enspace .$$
MIPS and its variant top-$k$ MIPS, which finds the top-$k$ largest inner product points with a query, are central tasks in the retrieval phase of standard collaborative filtering based recommender systems~\cite{Cremonesi10,KorenIEEE09}.
They are also algorithmic ingredients in a variety of machine learning tasks, for instance, prediction tasks on multi-class learning~\cite{Dean13,ImageNet} and neural network~\cite{Covington16,Spring17}.
%, and as a black-box procedure to speed up learning and inference algorithms~\cite{Mussmann16}.

Modern real-world online recommender systems often deal with very large-scale data sets and a limited amount of response time. %~\cite{Xbox,Netflix}.
%The Xbox or Netflix recommender has to deal with millions of items and users~\cite{Xbox,Netflix}.
Such collaborative filtering based systems often present users and items as low-dimensional vectors.
A large inner product between these vectors indicates that the items are relevant to the user preferences. 
%and should be in the recommendation list to the user.
The recommendation is often performed in the \textit{online} manner since the user vector is updated online with ad-hoc contextual information only available during the interaction~\cite{Xbox,YahooMusic,Koren09}.
%For instance, new users often want to see the list of recommended items immediately when they already input their preferences.
A personalized recommender needs to infer user preferences based on online user behavior, e.g. recent search queries and browsing history, as implicit feedback to return relevant results~\cite{Hu08,Rendle09}.
%since user preferences e.g. in movies or music often change over time, a recommender system needs to frequently update its model to address such \emph{drifting} user preferences in order to return relevant results.
Since the retrieval of recommended items is only performed online, the result of this task might not be ``perfect'' given a small amount of waiting time but its accuracy/relevance should be improved given more waiting time.
Hence, it is challenging to not only speed up the MIPS process, but to trade the search efficiency for the search quality for performance improvement.
% improving the performance of recommender systems.

Motivated by the computational bottleneck in the retrieval phase of modern recommendation systems, this work investigates the \emph{budgeted} MIPS problem, a natural extension of MIPS  with a computational limit for the search efficiency and quality trade-off.
%Given that $d = \BO{\log{n}}$, 
Our budgeted MIPS addresses the following question:

\medskip

{\em Given a data structure built in $\TilO{dn}$ time~\footnote{Polylogarithmic factors, e.g. $\log{d}\log{n}$ is absorbed in the $\tilde{\Osymbol}$-notation.} and budgeted computational operations, can we have an algorithm to return the best approximate top-$k$ MIPS?}
%by accessing at most $b = \eta n$ data points, where $\eta$ is a small constant, e.g. 1\%.?}

\medskip

To measure the accuracy of approximate top-$k$ MIPS, we use the search \emph{recall}, i.e. the empirical probability of retrieving the true top-$k$ MIPS.
In our budgeted setting, we limit the time complexity of building a data structure to $\TilO{dn}$ since when a context is used in a recommender system, the learning phase cannot be done entirely offline~\cite{Xbox,YahooMusic}. In other words, the item vectors are also computed online and hence a high cost of constructing the data structure will degrade the performance.
Furthermore, since user preferences often change over time, a recommender system needs to frequently update its factorization model to address such \emph{drifting} user preferences.
This means that the item vectors and our data structure will be updated frequently.
 
It is worth noting that the budgeted MIPS has been recently studied in~\cite{Yu17} given a budget of $B = \eta n$ inner product computation where $\eta$ is a small constant, e.g. 5\%.
Furthermore, such budget constraints on the number of computational operations or on accessing a limit number of data points are widely studied not only on search problems~\cite{Ram12} but also on clustering~\cite{Mai13,Shamir11} and other problems~\cite{Fetaya15,Zilberstein96} when dealing with large-scale complex data sets.

\subsection{Prior art on solving MIPS and its limit on budgeted MIPS}

It is well-known that due to the ``curse of dimensionality'', any exact solution for MIPS based on data or space partitioning indexing data structures generally degrades when dimensionality increases.
It is no better than a simple sequential scanning when the dimensionality is larger than~10~\cite{FEXIPRO,Weber98}. 
Hence recent work on solving MIPS focuses on speeding up sequential scanning by pruning the search space~\cite{Abuzaid19,FEXIPRO,LEMP}.
Though such methods can solve MIPS exactly, they require $\Theta(n)$ operations and therefore do not fit well to the budgeted MIPS setting where we might need $o(n)$ operations for some query.
Furthermore, these methods do not provide any trade-off between the search quality and efficiency for online queries.

Another research direction is investigating approximation solutions that trade accuracy for efficiency.
Since \emph{locality-sensitive hashing} (LSH)~\cite{Har12} has emerged as a basic algorithmic tool for similarity search in high dimensions due to the sublinear query time guarantee, several approaches have followed this direction to obtain sublinear solutions for approximate MIPS~\cite{Huang18,Neyshabur15,Shrivastava14,Yan18}.
Due to the inner product not being a metric, these LSH-based solutions have to convert MIPS to near neighbor search problem by applying order-preserving transformations to exploit the LSH framework.

Although LSH-based approaches can guarantee sublinear query time, the top-$k$ inner product values are often very small compared to the vector norms in high dimensions.
This means that the distance gap between ``close'' and ``far apart'' points after transformation in the LSH framework is arbitrarily small.
That leads to not only the space usage (i.e. the number of hash tables) blow up, but also degrading LSH performance~\cite{Ahle16}.
Furthermore, the LSH trade-off between search quality and efficiency is somewhat ``fixed'' for any query since it is governed by specific parameters of the LSH data structure, e.g. number of hash functions.
We will show these limitations in our experiment section.
Note that the learning phase of a recommender system has to be executed in an online manner when a context is used~\cite{Xbox,YahooMusic}.
In this case, the significant cost of building LSH tables will be a computational bottleneck for handling online recommendations.
%, hence, it is not suitable for the budgeted setting where the computational budget for different queries often varies.

An alternative efficient solution is applying sampling methods to estimate the matrix-vector multiplication derived by top-$k$ MIPS~\cite{Diamond,Wedge}.
The basic idea is to sample a point $\bx$ with probability proportional to the inner product $\bx \cdot \bq$.
The larger inner product values the point $\bx$ has, the more occurrences of $\bx$ in the sample set.
By the end of the sampling process, we extract top-$B$ points ($B > k$) with the largest occurrences in the sample set via a counting histogram.
The top-$k$ points with the largest inner product among these $B$ points will be returned as an approximate top-$k$ MIPS.
Such sampling schemes naturally fit the budgeted setting since the more samples provide the higher accuracy.

\subsection{Our contribution}

This work studies sampling methods for solving the budgeted MIPS since they naturally fit to the class of budgeted problems.
Sampling schemes provide not only the trade-off between search quality and search efficiency but also a flexible mechanism to control this trade-off via the number of samples $S$ and the number of inner product computation $B$.
Our contributions are as follows:
\begin{enumerate}
	\item We revise popular sampling methods for solving MIPS, including basic sampling, wedge sampling~\cite{Wedge}, and the state-of-the-art diamond sampling methods~\cite{Diamond}.
	We show that diamond sampling is essentially a combination of basic sampling and wedge sampling.
	\item Our novel theoretical analysis and empirical evaluation illustrate that 
	%concentration bound on wedge sampling show that wedge requires less number of samples than diamond.
	wedge is competitive (often superior) to diamond on approximating top-$k$ MIPS regarding both efficiency and accuracy.
	%Hence its performance degrades when the contributions of inner product are skew.
	%it inherits disadvantages of the simple sampling method and 
	%
	\item We propose \emph{dWedge}, a very simple but efficient \emph{deterministic} variant of wedge sampling, with a flexible mechanism to govern the trade-off between search quality and efficiency for the budgeted top-$k$ MIPS.
	
	\item Empirically, dWedge outperforms other competitive budgeted MIPS solvers~\cite{Neyshabur15,Yan18,Yu17} on standard recommender system data sets.
	Especially, dWedge returns the top-10 MIPS with at least 90\% accuracy with the speedup between 20x and 180x compared to the brute-force search on our large-scale data sets.	
%	Our method with the setting of $o(n)$ operations, the ``fastest'' sequential scanning method, achieves very high accuracy.
	%
%	\item We extend wedge sampling method for budgeted near neighbor search by applying order-preserving transformation to transform it to budgeted MIPS~\cite{Xbox,Huang18}.

\end{enumerate}

\section{Notation and Preliminaries} \label{sec:background}

%We use lower-case fonts for scalars, upper-case fonts for random variables, bold lower-case fonts for vectors, and bold upper-case fonts for matrices.
%We use standard notations that $[n] = \{1, 2, \ldots, n \}$ and columns of a matrix are vectors.
We present the point set $\mX$ as a matrix ${\bX} \subset \Rnd$ where each point $\bx_i$ corresponds to the $i$th row, and the query point $\bq$ as a column vector $\bq = (q_{1}, \ldots, q_{d})^T$.
We use $i \in [n]$ to index row vectors of ${\bX}$, i.e. $\bx_i = (x_{i1}, \ldots, x_{id}) \in \Rd$.
Since we will describe our investigated methods using the column-wise matrix-vector multiplication $\bX \bq$, we use $j \in [d]$ to index column vectors of $\bX$, i.e. $\by_j = (x_{1j}, \ldots, x_{nj})^T \in \mathbb{R}^n$.
% the column vector $j$ of ${\bX}$ and  for consistence.
For each column $j$, we pre-compute its 1-norm $c_j = \|y_j\|_1$.
%We also define the minimum and maximum values of $\by_j$ as $\alpha_j = \min(\by_j) = \min(x_{1j}, x_{2j}, \ldots, x_{nj})$ and $\beta_j = \max(\by_j) = \max(x_{1j}, x_{2j}, \ldots, x_{nj})$, respectively.
%We note that when $\bX$ are non-negative, we have $c_j = \| \by_j \|_1$ and $m_j = \| \by_j\|_{\infty}$.

We briefly review sampling approaches for estimating inner products $\bx_i \cdot \bq$.
% and provide corresponding algorithms for MIPS and its potential extension for budgeted MIPS. 
For simplicity, we first assume that $\bX$ and $\bq$ are non-negative.
Then we show how to extend these approaches to handle negative inputs with their limits.
We consider the column-wise matrix-vector multiplication $\bX \bq$ as follows.
%, i.e. sum of $d$ rank-one matrices, as follows.
\begin{equation}
\begin{aligned}
\label{eq:Xq}
{\bX \bq} &= 
\begin{bmatrix}
x_{11} \\
\vdots \\
x_{n1}
\end{bmatrix} q_1 
+ 
\begin{bmatrix}
x_{12} \\
\vdots \\
x_{n2}
\end{bmatrix}  q_2
+ \ldots
+ \begin{bmatrix}
x_{1d} \\
\vdots \\
x_{nd}
\end{bmatrix} q_d \\
&= \by_1 q_1 + \by_2 q_2 + \ldots + \by_d q_d  
\end{aligned}
\end{equation}

%For ease of representation, we denote by a new matrix $\bZ$ where each column corresponds to each term of the sum above.
%$${\bZ} = 
%\begin{bmatrix}
%x_{11}q_1 \enspace & \enspace x_{12}q_2 \enspace & \enspace \ldots \enspace & \enspace x_{1d}q_d \\
%x_{21}q_1 \enspace & \enspace x_{22}q_2 \enspace & \enspace \ldots \enspace & \enspace x_{2d}q_d \\
%\vdots \\
%x_{n1}q_1 \enspace & \enspace x_{n2}q_2 \enspace & \enspace \ldots \enspace & \enspace x_{nd}q_d
%\end{bmatrix} \enspace .
%$$

\subsection{Basic Sampling}\label{sec:random}

%Before presenting advanced sampling techniques, we describe a simple sampling technique to estimate the inner product $\bx_i \cdot \bq$ based on the column-row wise matrix-vector multiplication $\bX \bq$ as follows:
Basic sampling is a very straightforward method to estimate the inner product $\bx_i \cdot \bq$ for the point $\bx_i$.
For any row $i$, we sample a column $j$ with probability $q_j/ \| \bq \|_1$ and return $x_{ij}$.
Define a random variable $Z_i = x_{ij}$, we have
\begin{align*}
\E{Z_i} = \sum_{j = 1}^{d} x_{ij} \frac{q_j}{\|\bq\|_1} = \frac{\bx_i \cdot \bq}{\|\bq\|_1} \ .
\end{align*}

%To increase the accuracy, we often sample $s$ times and compute the mean as an estimator.
The basic sampling suffers large variance when most of the contribution of $\bx_i \cdot \bq$ are from a few coordinates.
In particular, the variance will be significantly large when the main contributions of $\bx_i \cdot \bq$ are from a few coordinates $x_{ij}q_j$ and $q_j$ are very small.
Note that this basic sampling approach has been used in~\cite{Drineas06} as an efficient sampling technique for approximating matrix-matrix multiplication.

\textbf{Negative inputs:} To handle the negative cases, one can change the sampling probability to  $|q_j|/ \| \bq \|_1$ and return $Z_i = \sgn(q_j)x_{ij}$ where $\sgn$ is the sign function, i.e. $\sgn(u) = -1$ if $u < 0$ and $\sgn(u) = 1$ if $u \geq 0$.
It is clear that $\E{Z_i} = \bx_i \cdot \bq / \|\bq\|_1$.
Despite providing the unbiased estimate, this scheme needs $S = \Omega(n)$ samples for estimating $n$ inner product values to answer top-$k$ MIPS.

\subsection{Wedge Sampling}\label{sec:wedge}

%\begin{algorithm}[!t]
%	\caption{ Wedge sampling}
%	\label{alg:wedge} 									
%	\begin{algorithmic} [1]
%		%-----------------------------------------
%		\REQUIRE { Matrices $\bX$, query $\bq$, pre-computed values $z$ and $c_j$ for each $j \in [d]$, number of samples $S$ and number of inner products $B$. }
%		\ENSURE { Approximate top-$k$ MIPS for $\bq$. }
%		%-----------------------------------------
%		\STATE { \textbf{Screening}: Wedge sample $S$ points and increase its corresponding counter value. }
%		\STATE{ Scan the counting histogram to return top-$B$ points with the largest counter values. }
%		\STATE { \textbf{Ranking}: Compute these $B$ inner products and return top-$k$ points with largest inner product values. }
%		%\STATE {Compute $b$ inner products and insert into priority queue}
%		%\RETURN {Return top-$k$ points with largest inner products}
%		%
%		\normalsize
%	\end{algorithmic}
%\end{algorithm}

\begin{algorithm}[!t]
\SetAlgoLined
%	\begin{algorithmic} [1]
		%-----------------------------------------
		\KwData { Matrices $\bX$, query $\bq$, pre-computed values $z$ and $c_j$ for each $j \in [d]$, number of samples $S$ and number of inner products $B$. }
		\KwResult { Approximate top-$k$ MIPS for $\bq$.  }
		%-----------------------------------------
		\textbf{Screening:} Wedge sample $S$ points and increase its counter value.\\
		Extract top-$B$ points with the largest values from the counter histogram.\\
		\textbf{Ranking}: Compute these $B$ inner products and return top-$k$ points with the largest inner product values.
		%\STATE {Compute $b$ inner products and insert into priority queue}
		%\RETURN {Return top-$k$ points with largest inner products}
		%
		\normalsize
%	\end{algorithmic}
	\caption{ Wedge sampling}
	\label{alg:wedge} 
\end{algorithm}

Cohen and Lewis~\cite{Wedge} proposed an efficient sampling approach, called wedge sampling, to approximate matrix multiplication and to isolate the largest inner products as a byproduct.
%For simplicity, we call this method as wedge sampling\footnote{Other work~\cite{Diamond} called as wedge sampling.}.
Wedge sampling needs to pre-compute some statistics, including the sum of all inner products $z = \sum_{i}z_i$ where $z_i = \bx_i \cdot \bq$ and 1-norm of column vectors $c_j = \| \by_j \|_1$.
Since we can pre-compute $c_j$ \emph{before} querying, computing $z = \sum_{j} q_j c_j$ clearly takes $\BO d$ query time.
We can think of $q_jc_j/z$ as the contribution ratio of the column $j$ to the sum of inner product values $z$.

The basic idea of wedge sampling is to randomly sample a row index $i$ corresponding to $\bx_i$ with probability $z_i / z$.
Hence, the larger the inner product $z_i = \bx_i \cdot \bq$, the larger the number of occurrences of $i$ in the sample set.
Consider Equation~(\ref{eq:Xq}), wedge sampling first samples a column $j$ corresponding to $\by_j$ with probability $q_jc_j/z$, and then samples a row $i$ corresponding to $\bx_i$ from $\by_j$ with probability $x_{ij}/c_j$.
By Bayes's theorem, we have
\begin{align*}
\Pr{\texttt{Sampling } i} &= 
\sum_{j=1}^{d}{\Pr{\texttt{Sampling } i | \texttt{Sampling } j}} \cdot \Pr{\texttt{Sampling } j} \\
&= \sum_{j=1}^{d} \frac{x_{ij}}{c_j} \cdot  \frac{q_jc_j}{z} = \frac{\sum_{j=1}^{d}{x_{ij} q_j}}{z} = \frac{z_i}{z} \enspace .
\end{align*}

Applying wedge sampling method on $\bX \bq$, we obtain a sample set where each index $i$ corresponding to $\bx_i$ is sampled according to an independent Bernoulli distribution with parameter $p_i = z_i / z$.
As a screening phase, a simple counting algorithm will be used to find the points with the largest counters.
Given $S$ samples and a constant cost for each sample, such a counting algorithm runs in $\BO{S + min(S, n)\log{k}}$ time to answer approximate top-$k$ MIPS.
If we have an additional budget of $B > k$ inner product computation, we can compute the exact inner product values of the top-$B$ points with the largest counter values for ranking. 
%execute a post-processing which considers top $m > k$ points with largest counters and computes exactly these $m$ inner products.
Such ranking (or post-processing) phase with an additional $\BO{dB}$ (since $d > \log{k}$) computational cost will provide higher accuracy for top-$k$ MIPS in practice.
Algorithm~\ref{alg:wedge} shows how the wedge sampling works.
 %and uses $\BO{n' + d}$ space where $n'$ is the number of point indexes in the sample set.

We note that since wedge sampling uses the contribution ratio $q_jc_j/z$ to sample the column $j$, it can alleviate the effect of skewness of $\bx_i \cdot \bq$ where large contributions are from a few coordinates.
Hence wedge sampling achieves lower variance than the basic sampling in practice.
%At the query phase, wedge sampling with careful implementation~\cite{Wedge} needs only simple counting and \textit{sequential} memory access operations, which are always faster than expensive floating-point multiply-add operations required by any bruteforce method.

\textbf{Negative inputs:} Again, we can use the sign trick to deal with negative cases.
We note that this trick has been first exploited in the diamond sampling approach~\cite{Diamond}.
In particular, we execute wedge sampling on absolute values of $\bX$ and $\bq$, and return $Z_i = \sgn(x_{ij})\sgn(q_{j})$ for the point $\bx_i$.
It is clear that $\E{Z_i}$ is proportional to $\bx_i \cdot \bq$.
%Hence we can leverage an efficient implementation of wedge sampling~\cite{Wedge} to answer top-$k$ MIPS with negative inputs.
The analysis of this trick under some assumptions of the data distribution can be found in~\cite{Ding19}.

\subsection{Diamond Sampling}\label{sec:diamond}

Ballard et al.~\cite{Diamond} proposed diamond sampling to find the largest \emph{magnitude} elements from a matrix-matrix multiplication $\bX \bQ$ without computing the final matrix directly.
%This work received the Best Paper Award at ICDM 2015.
The method first presents $\bX \bQ$ as a weighted tripartite graph. %, as shown in Figure~\ref{fig:diamond}.
%The number of nodes on the left (right) present the number of rows (columns) of $\bX$ ($\bQ$) while the number of nodes in the middle correspond to the number of columns (rows) of $\bX$ ($\bQ$).
Then it samples a diamond, i.e. four cycles from such graph with probability proportional to the value $(\bX \bQ)^2_{ij}$, which claims to amplify the focus on the largest magnitude elements.

Consider a vector $\bq$ as a one-column matrix $\bQ$, it is clear that diamond sampling can be applied to solve MIPS.
Indeed, we will show that diamond sampling is essentially a combination of wedge sampling and basic sampling when approximating $\bX \bq$.
%, as shown in Figure~\ref{fig:diamond}.
%Figure~\ref{fig:diamond} shows how diamond sampling works.
In particular, diamond sampling first makes use of wedge sampling to return a random row $i$ corresponding to $\bx_i$ with probability $z_i/z$.
Given such row $i$, it then applies basic sampling to sample a random column $j'$ with probability $q_{j'}/ \| \bq \|_1$ and return $x_{ij'}$ as a scaled estimate of $\left(\bx_i \cdot \bq\right)^2$.
Define a random variable $Z_i = x_{ij'}$ corresponding to $\bx_i$, using the properties of wedge sampling and basic sampling we have
\begin{align*}
\E{Z_i} = \sum_{j' = 1}^{d} x_{ij'} \frac{q_{j'}}{\|\bq\|_1} \cdot \frac{z_i}{z} = \frac{(\bx_i \cdot \bq)^2}{z\|\bq\|_1} \enspace .
\end{align*}

Since diamond sampling builds on basic sampling, it
%shares the same drawback with basic sampling.
suffers from the same drawback as basic sampling.
%, as shown in the experimental section.
To answer top-$k$ MIPS, diamond follows the same procedure as wedge hence shares the same asymptotic running time.

%\begin{figure*} [!t]
%	\centering
%	%\raggedleft
%	\includegraphics[width=0.6\textwidth]{Fig/diamond.eps}
%	\caption{Given that $n = 3, d = 2$, $\bX \bq$ is presented as a weighted tripartite graph above.
%		Diamond sampling exploits wedge sampling to pick a random wedge $(i, j, 1)$ by first choosing column $j$ with probability $q_jc_j/z$,  and then $x_{ij}$ of $\bx_i$ with probability $x_{ij}/c_j$.
%		After that, it exploits basic sampling to choose a random column $j'$ with probability $q_j'/ \| \bq \|_1$ to form a diamond $(i, j, 1, j')$,
%		and returns $x_{ij'}$ as a scaled estimate of $(\bx_i \cdot \bq)^2$.}
%	\label{fig:diamond}
%\end{figure*}

\textbf{Negative inputs:} Handling negative cases using diamond is similar to wedge.
We apply diamond sampling on absolute values of $\bX$ and $\bq$ then return $Z_i = \sgn(q_{j}) \sgn(x_{ij}) \sgn(q_{j'})x_{ij'}$ where $j$ is the column sampled by wedge sampling and $j'$  is the column sampled by basic sampling.
We can verify that $\E{Z_i}$ is proportional to $\left(\bx_i \cdot \bq \right)^2$.
%, which is useful to find the largest magnitude dot products.

Although diamond sampling can deal with negative inputs, its concentration bound only works on non-negative cases.
Furthermore, diamond sampling indeed solves a different problem, i.e. $\argmax_i{(\bx_i \cdot \bq)^2}$, which will give a completely different result on negative inputs.
%This is due to the complex analysis of dependent sign random variables.
In practice, the implementation of diamond sampling requires significant query time overhead due to the basic sampling.
This sampling process generates random variables from a discrete distribution derived from the query $\bq$ and requires expensive \textit{random} access operations to access $x_{ij'}$.
We will show that wedge significantly outperforms diamond regarding both accuracy and efficiency on our benchmark data sets.

\section{Wedge Sampling for Budgeted Top-$k$ MIPS}

This section first presents a new analysis of wedge sampling.
Our theoretical concentration bound shows that wedge requires fewer samples than diamond on approximating top-$k$ MIPS.
%Moreover, wedge can govern the trade-off between search quality and efficiency via the number of used samples.
Then we present a drawback of wedge sampling for the budgeted MIPS and propose \emph{dWedge}, a simple deterministic variant to handle such drawback.
dWedge can govern the trade-off between search quality and efficiency with two parameters: the number of samples $S$ and the number of $B$ inner product computation.

\subsection{Analysis of Wedge Sampling}

This subsection shows the analysis of wedge sampling on non-negative inputs.
%One can easily derive the analysis in the original space with the shifting value $\delta$ defined above.
Consider a counting histogram of $n$ counters corresponding to $n$ point indexes, the following theorem states the number of samples required to distinguish between two inner product values $\tau_1$ and $\tau_2$.

\begin{theorem}\label{thm:main}
Fix two thresholds $\tau_1 > \tau_2 > 0$ and suppose $S \geq \frac{3z\ln{n}}{(\sqrt{\tau_1} - \sqrt{\tau_2})^2}$ where $z = \sum_{i}{\bx_i \cdot \bq}$. 
With probability at least $1 - \frac{1}{n}$, the following holds for all pairs $i_1, i_2 \in [n]$: if $\bx_{i_1} \cdot \bq \geq \tau_1$ and $\bx_{i_2} \cdot \bq \leq \tau_2$, then $counter[i_1] > counter[i_2]$.
\end{theorem}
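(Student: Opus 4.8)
The plan is to use the characterization established above: after drawing $S$ wedge samples, the counter of each index $i$ is a sum of $S$ independent Bernoulli trials of success probability $p_i = z_i/z$, so $counter[i]$ has the binomial distribution $\mathrm{Bin}(S, p_i)$ with mean $S z_i/z$. (The $n$ counters are not mutually independent --- jointly they are multinomial --- but only the marginal law of each individual counter is used below, so this is irrelevant.) I would then fix a single separating threshold $t$ and prove that, with probability at least $1 - 1/n$, every ``heavy'' index $i_1$ (with $z_{i_1} \ge \tau_1$) has $counter[i_1] > t$ while every ``light'' index $i_2$ (with $z_{i_2} \le \tau_2$) has $counter[i_2] < t$; the statement of the theorem is then immediate. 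The threshold that makes the factor $(\sqrt{\tau_1}-\sqrt{\tau_2})^2$ appear is the scaled geometric mean $t = S\sqrt{\tau_1\tau_2}/z$.

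For a heavy index $i_1$, stochastic monotonicity of the binomial in $p$ lets me assume the worst case $p_{i_1} = \tau_1/z$, so $\Pr{counter[i_1] \le t}$ is at most the probability that $\mathrm{Bin}(S, \tau_1/z)$ falls below $(1 - \delta_1)$ times its mean, with $\delta_1 = 1 - \sqrt{\tau_2/\tau_1}$. The multiplicative Chernoff bound $\exp(-\delta_1^2 \mu / 2)$ together with the identity $\tau_1 - \sqrt{\tau_1\tau_2} = \sqrt{\tau_1}\,(\sqrt{\tau_1} - \sqrt{\tau_2})$, which cancels the $\tau_1$ factors inside $\delta_1^2 \mu$, yields a bound of the shape $\exp(-c\,(\sqrt{\tau_1} - \sqrt{\tau_2})^2 S / z)$. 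Symmetrically, for a light index $i_2$ I would take $p_{i_2} = \tau_2/z$ and bound $\Pr{counter[i_2] \ge t}$ by an upper Chernoff tail for $\mathrm{Bin}(S, \tau_2/z)$ at $(1 + \delta_2)$ times its mean with $\delta_2 = \sqrt{\tau_1/\tau_2} - 1$; the companion identity $\sqrt{\tau_1\tau_2} - \tau_2 = \sqrt{\tau_2}\,(\sqrt{\tau_1} - \sqrt{\tau_2})$ again turns the exponent into something of order $(\sqrt{\tau_1} - \sqrt{\tau_2})^2 S / z$.

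To finish, note that the heavy and light index sets are disjoint, so there are at most $n$ bad events in all; substituting $S \ge 3 z \ln n / (\sqrt{\tau_1} - \sqrt{\tau_2})^2$ drives each tail probability low enough that a union bound over these $n$ events leaves failure probability at most $1/n$. I expect the only genuinely delicate point to be the upper-tail estimate for light indices: when $\tau_1 / \tau_2$ is large, $\delta_2 = \sqrt{\tau_1/\tau_2} - 1$ exceeds $1$, so the ``$\delta \le 1$'' form of the Chernoff inequality no longer applies and one must instead invoke a version valid for all $\delta > 0$ (or the sharper $(e\mu/t)^{t}$-type / relative-entropy bound) and recheck that it still produces the $(\sqrt{\tau_1} - \sqrt{\tau_2})^2$ dependence with a good enough constant. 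Pinning down that constant --- and the exact power of $1/n$ needed so the union bound closes --- is a routine optimization of the threshold and the Chernoff exponents; everything else is bookkeeping.
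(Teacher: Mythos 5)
Your strategy is genuinely different from the paper's. The paper never introduces a separating threshold: for each heavy/light pair $(i_1,i_2)$ it bounds $\Pr{counter[i_2]-counter[i_1]\geq 0}$ directly by the exponential moment method, using that at each sample the two indicator variables are defined on the same draw, computing $\E{e^{\lambda(Y_l-X_l)}}=e^{\lambda}p_2+(1-p_1-p_2)+e^{-\lambda}p_1$ and optimizing $\lambda=\ln\sqrt{p_1/p_2}$ to get the per-sample factor $1-(\sqrt{p_1}-\sqrt{p_2})^2$, hence $\Pr{\text{fail}}\leq e^{-S(\sqrt{p_1}-\sqrt{p_2})^2}$ per pair, followed by a union bound over at most $n^2$ pairs. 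You instead decouple the two counters, prove one-sided concentration of each around the common threshold $t=S\sqrt{\tau_1\tau_2}/z$, and union over at most $n$ indices. Both decompositions are legitimate, and yours has the aesthetic advantage of a smaller union bound.

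However, the trade is not free, and this is where your proof does not close as stated. Splitting the pairwise event into two one-sided threshold events costs a factor of $2$ in the exponent: with $\mu_1=S\tau_1/z$ and $\delta_1=1-\sqrt{\tau_2/\tau_1}$ you get $\delta_1^2\mu_1=S(\sqrt{\tau_1}-\sqrt{\tau_2})^2/z$, so the lower-tail Chernoff bound gives $\exp\left(-\delta_1^2\mu_1/2\right)=\exp\left(-S(\sqrt{\tau_1}-\sqrt{\tau_2})^2/2z\right)\leq n^{-3/2}$ under the stated hypothesis on $S$, and the light side is no better. A union bound over up to $n$ indices then leaves failure probability of order $n^{-1/2}$, not $1/n$. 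This is not a matter of choosing a sharper tail inequality or a cleverer threshold: the best exponent any fixed-threshold test can achieve for separating $\mathrm{Bern}(\tau_1/z)$ from $\mathrm{Bern}(\tau_2/z)$ is the Chernoff information, which for small success probabilities is $\approx(\sqrt{\tau_1}-\sqrt{\tau_2})^2/2z$ per sample --- exactly half of what the paper extracts from the pairwise comparison. So your route proves the theorem only with $S\geq 4z\ln n/(\sqrt{\tau_1}-\sqrt{\tau_2})^2$, or with success probability $1-O(n^{-1/2})$ under the stated $S$; the step you deferred as ``routine optimization of the constant'' is precisely the step that fails. The secondary issue you flag (the light-side deviation $\delta_2=\sqrt{\tau_1/\tau_2}-1$ possibly exceeding $1$) is real but easily handled by the $\exp(-\delta^2\mu/(2+\delta))$ or relative-entropy form of the bound; it is not the bottleneck.
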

\begin{proof}
Define $p_1 = \frac{\bx_{i_1} \cdot \bq}{z} \geq \frac{\tau_1}{z}$ and $p_2 = \frac{\bx_{i_2} \cdot \bq}{z} \leq \frac{\tau_2}{z}$.
For $l = 1, \ldots, S$, we consider independent pair of random variables $(X_l, Y_l)$ where
\begin{align*}
X_l = 
\left\{ 
				\begin{array}{rl}
 								1 & \mbox{if } \bx_{i_1} \mbox{is chosen at $l$th sample;} \\
 								0 & \mbox{otherwise}
        \end{array} 
\right. 
\enspace \enspace
Y_l = 
\left\{ 
				\begin{array}{rl}
 								1 & \mbox{if } \bx_{i_2} \mbox{is chosen at $l$th sample;} \\
 								0 & \mbox{otherwise.}
        \end{array} 
\right.
\end{align*}
Define $X = \sum_{l = 1}^{S}X_l$ and $Y = \sum_{l = 1}^{S}Y_l$.
We only consider the failure case where $ Y - X \geq 0$.
Applying Markov inequality for any $\lambda > 0$, we have
\begin{align*}
\Pr{Y - X \geq 0} &= \Pr{e^{\lambda(Y - X)} \geq 1} \leq \E{e^{\lambda(Y - X)}}  \\
&= \E{e^{\lambda \left(\sum_l{Y_l} - \sum_l{X_l} \right)}} = \prod_{l = 1}^{S}{\E{e^{\lambda(Y_l - X_l)}}}\enspace.
\end{align*}
We also have
\begin{align*}
\E{e^{\lambda(Y_l - X_l)}} &= e^\lambda p_2 + \left(1 - p_1 - p_2 \right) + e^{-\lambda} p_1 \\
&\geq 2\sqrt{p_1 p_2} + 1 - p_1 - p_2 = 1 - \left(\sqrt{p_1} - \sqrt{p_2} \right)^2 \enspace.
\end{align*}
The equality holds when $\lambda = \ln{\sqrt{p_1 / p_2}} > 0$.
In other words, by choosing $\lambda = \ln{\sqrt{p_1 / p_2}}$, we have
\begin{align*}
\Pr{Y - X \geq 0} &\leq \left( 1 - \left(\sqrt{p_1} - \sqrt{p_2} \right)^2  \right)^S \leq e^{-S\left( \sqrt{p_1} - \sqrt{p_2} \right)^2} \enspace .
\end{align*}
By choosing $S \geq \frac{3z\ln{n}}{(\sqrt{\tau_1} - \sqrt{\tau_2})^2} \geq \frac{3\ln{n}}{\left(\sqrt{p_1} - \sqrt{p_2} \right)^2}$ and the union bound, the theorem holds with probability at least $1 - 1/n$. \qed
% = \frac{z\ln{n}\left(\sqrt{\tau_1} + \sqrt{\tau_2} \right)^2}{\left(\tau_1 - \tau_2 \right)^2}
%
\end{proof}

%We note that our concentration bound works only on the transform space since it requires non-negative inner products.
%The required number of samples to distinguish between the ``large'' and ``small'' inner products depends on the sum of inner products $z$.
%the squared gap between two thresholds, i.e. $1 / \left(\tau_1 - \tau_2 \right)^2$.
%Note that our proposed order preserving transformations preserve such difference gap but does not preserve the sum of inner products $z$.
%Hence the value $z$ is computed on the transform space, which might be larger or smaller than that of the original space dependent on the query $\bq$.

\textbf{Trade-off between search quality and efficiency:}
By choosing $S$ as Theorem~\ref{thm:main}, we have $\sqrt{\tau_1} - \sqrt{\tau_2} \geq \sqrt{3z(\ln{n})/S}$. 
Assume that the top-$k$ value is $\tau_1$, 
%it is clear that the number of samples $S$ can control the trade-off between search quality and efficiency.
it is clear the more samples we use, the smaller gap between the top-$k$ MIPS values and the other values we can distinguish. 
We note that we can compute and rank $B$ inner products of the top-$B$ points with the largest counter values in the ranking phase.
Increasing $B$ corresponds to increasing the gap $Y - X$ in our analysis, and hence decreasing the failure probability.
%by setting $\tau_2$ as this top-$B$ MIPS value, it is clear that increasing $B$  will increase the gap $\sqrt{\tau_1} - \sqrt{\tau_2}$.
This means that both $B$ and $S$ can be used to control such trade-off.

\textbf{Comparison to diamond sampling:} For a fair theoretical comparison, we consider the same setting as in~\cite[Theorem 4]{Diamond} where we want to distinguish  $\bx_{i_1} \cdot \bq \geq \tau$ and $\bx_{i_2} \cdot \bq \leq \tau/4$, and all entries in $\bX$ and $\bq$ are positive\footnote{Diamond sampling's analysis only works on non-negative inputs.}.
Applying Theorem~\ref{thm:main}, wedge sampling needs $S_w \geq 12z\ln{n}/\tau$.
Diamond sampling needs $S_d \geq 12K\|\bq\|_1z\ln{n}/\tau^2$ where all entries in $\bX$ are at most $K$.
Since $K\|\bq\|_1 \geq \tau$ for any $\tau$, wedge requires strictly less samples than diamond.
%This also verifies the intuition that diamond sampling is the combination between wedge sampling and basic sampling, hence its concentration bound should be weaker than that of wedge sampling.

\textbf{General inputs: } For the general cases, Theorem 1 does not hold anymore.
% we will exploit the sign trick as mentioned in Section 2.
%However, Theorem 1 does not hold anymore.
%$\sum_j |x_{i_1j}q_j| \geq \tau_1 $ and  $\sum_j |x_{i_2j}q_j| \leq \tau_2$, which is different from the MIPS problem.
We observe that there are several ways to convert both data and query into non-negative forms without changing their inner product order.
For instance, by shifting each column $\by_j$ a constant factor (e.g. its minimum or maximum value) dependent on $\sgn(q_j)$, the order of inner products $\bX\bq$ is preserved and therefore Theorem 1 still holds for MIPS in the non-negative transformation space.
We leave this research direction to future work.

\subsection{dWedge: A Simple Deterministic Variant for Budgeted MIPS}\label{sec:implementation}

%This subsection presents a drawback of wedge sampling for the budgeted MIPS.
%Then we introduce a simple deterministic variant to handle this issue.
This subsection presents a significant drawback of wedge sampling to solve top-$k$ MIPS with a $o(n)$ budgeted computation.
%.
We then introduce dWedge, a simple deterministic variant to handle such a drawback. 
%We also exploit dWedge for diamond sampling which greedy approach to carefully select these samples, which leads to \emph{deterministic} versions of both sampling schemes.
For simplicity, we present our approach on non-negative $\bX$ and $\bq$.

\textbf{Drawback:} We observe that wedge sampling first samples the column $j$ and then samples the point $\bx_i$ on this column $j$.
In other words, given a fixed number of samples $S$, wedge sampling allocates $S$ samples to $d$ columns.
Each column $j$ receives $s_j$ samples and $\sum_j s_j = S$.
To ensure that wedge provides an accurate estimate, these $s_j$ samples must approximate the discrete distribution $\by_j/\|\by_j\|_1$ well.
Given the budget of $S = o(n)$ samples, the number of samples $s_j$ of the column $j$ is $S/d \ll n$ in expectation.
Since $s_j \ll n$ and the data set is often dense, it is impossible to approximate the discrete distribution $\by_j/\|\by_j\|_1$ by $s_j$ samples.
In a realistic case of the Netflix data set with $n = 17,770$ and $d = 300$, if we use $S = n$ samples, in expectation we only have nearly~60 samples to approximate a discrete distribution with $17,770$ values.
Hence the performance of wedge (and hence diamond) sampling dramatically degrades in the budgeted setting, as can be seen in the experiment section.

%
%\begin{algorithm}[!t]
%	\caption{dWedge Sampling}
%	\label{alg:dWedge} 									
%	\begin{algorithmic} [1]
%		%-----------------------------------------
%		\REQUIRE {For each dimension $j$, sort data in descending order on $x_{ij}$ and store them in the list $L_j$. Other pre-computed statistics $z$, $c_j$, and the query $\bq$.}
%		\ENSURE { A counting histogram of sampled data points.}
%		%-----------------------------------------
%		\STATE { Compute the number of samples $s_j = Sc_jq_j/z$ for each list $L_j$.}
%		\FOR {each sorted list $L_j$}
%		\STATE { Select $\bx_i$ in the descending order of $x_{ij}$.}
%		\STATE { Increase $\bx_i$'s counter and the current number of samples used of $L_j$ by $\lceil s_j x_{ij}/c_j \rceil$. }
%		\STATE { If the current number of samples is larger than $s_j$, stop iterating $L_j$. }
%		%
%		\ENDFOR
%		%
%		%		\RETURN { Return $\bs$}
%		%
%		\normalsize
%	\end{algorithmic}
%\end{algorithm}
%

\begin{algorithm}[!t]
%	\begin{algorithmic} [1]
		%-----------------------------------------
		\KwData {For each dimension $j$, sort data in descending order on $x_{ij}$ and store them in the list $L_j$. Other pre-computed statistics $z$, $c_j$, and the query $\bq$.}
		\KwResult { A counting histogram of sampled data points.}
		%-----------------------------------------
		 Compute the number of samples $s_j = Sc_jq_j/z$ for each sorted list $L_j$. \\
		\ForEach {sorted list $L_j$}{
		 Select $\bx_i$ in the descending order of $x_{ij}$. \\
		 Increase $\bx_i$'s counter and the current number of samples used of $L_j$ by $\lceil s_j x_{ij}/c_j \rceil$. \\
		 If the current number of samples is larger than $s_j$, stop iterating $L_j$.
	}
		%
		%		\RETURN { Return $\bs$}
		%
		\normalsize
%	\end{algorithmic}
	\caption{dWedge Sampling}
	\label{alg:dWedge} 	
\end{algorithm}

\textbf{dWedge:} 
%Given $S$ samples, the point $\bx_i$ will receive $ Sz_i/z$ samples in expectation.
%Due to the unbalanced contribution into the inner products, 
Observing that wedge sampling carefully distributes $S$ samples to each dimension.
Dimension $j$ receives $s_j = Sc_jq_j/z$ samples and hence the point $\bx_i$ on dimension $j$ will receive $s_jx_{ij}/c_j$ samples in expectation. 
Given $S = o(n)$, we have $s_j \ll n$ and therefore can only sample a few points on the dimension $j$.
%Due to this limit, we propose to select $\bx_i$ such that $x_{ij} \geq c_j/s_j$ and sample $\lceil s_j x_{ij} /c_j \rceil$ times.
%For the case where $x_{ij} < c_j/s_j$ for all $i$, we \emph{greedily} select $\bx_i$ with the largest $x_{ij}$ values in the column $j$ to generate samples.
Due to this limit, we propose to \emph{greedily} sample $\bx_i$ with the largest $x_{ij}$ values in the column $j$. 
For each selected $\bx_i$, we sample $\lceil s_j x_{ij} /c_j \rceil$ times.

Algorithm~\ref{alg:dWedge} presents our simple heuristic solution on non-negative $\bX$ and $\bq$.
Before executing this algorithm, we need a pre-processing step that sorts all data points in descending order for each dimension.
We iterate these sorted list and greedily sample $\bx_i$ on the descending order.
% $\lceil s_j x_{ij}/c_j \rceil$ times.
% if $x_{ij} \geq c_j/s_j$.
For each list $L_j$, we stop the iteration when exceeding the number of samples $s_j$. 
 For the general case, dWedge exploits the sign trick as standard wedge, executing on the absolute values of $\bX$ and $\bq$. 
If selected, $\bx_i$'s counter is increased by $\sgn(x_{ij})\sgn(q_j) \lceil s_j x_{ij}/c_j \rceil$.

\textbf{Time complexity:} It is clear that the pre-processing step takes $\BO{dn\log{n}}$ times and $\BO{dn}$ additional space.
dWedge sampling takes $\BO{S}$ time.
If $S \ll n$, we can use a hash table to maintain the counting histogram.
Otherwise, we use a vector of size $n$. 
The running time of dWedge for answering top-$k$ MIPS with a post-processing $B$ inner products is $\BO{S + min(S, n)\log{B} + dB}$.
%Both sampling phase and ranking phase of dWedge are easily to run in parallel for additional speedups.

\textbf{Cost model:} While the cost of extracting top-$B$ points in the ranking phase is larger than the sampling cost in theory, this cost in practice is much smaller due to the optimization of C++ \texttt{std::priority\_queue}.
We observe that dWedge executes two main operations at Step~4--5 for each sample.
Given $S$ samples, dWedge's cost is upper bounded by the cost of computing $2S/d$ inner products.
Hence we can model the operation cost of dWedge for answering top-$k$ MIPS as $2S/d + B$ inner product computation. 
Empirically, this cost model is very accurate due to the simplicity of dWedge.
%dWedge offers approximate $n / (B + 2S/d)$ speedup compared to a na\"{i}ve brute-force search implementation.
We will use this simple cost model to estimate the speedup of dWedge and to tune the parameters $S$ and $B$ for comparing to other budgeted MIPS solvers, as shown in the experiment.

\textbf{Relation to Greedy-MIPS~\cite{Yu17}:} This approach exploits the upper bound $\bx_i \cdot \bq \leq d \max_j\{q_jx_{ij}\}$ to construct $B$ candidates.
In principle, it greedily selects $B$ points $\bx_i$ with the largest $q_jx_{ij}$ values for each dimension $j$, then merges them to find top-$B$ candidates with $\max_j\{q_jx_{ij}\}$.
While dWedge shares the same pre-processing step and greedy spirit, there is a significant difference between dWedge and Greedy.
dWedge greedily estimates and differentiates the largest elements on each dimension $j$ using $s_j$ samples.
The more samples used, the more of the largest elements have been considered.
This leads to higher quality of top-$B$ candidates and hence top-$k$ MIPS.
On the other hand, Greedy-MIPS does not have the sampling step to improve the quality of candidates and hence its performance degrades, as shown in the experiment on the Gist data set.

\section{Experiment}\label{sec:experiment}

We implement the sampling schemes and other competitors in C++\footnote{\url{https://github.com/NinhPham/MIPS}} using -O3 optimization and conduct experiments on a 2.80 GHz core i5-8400 32GB of RAM.
%\footnotemark[\ref{dropbox}].
We present empirical evaluations to verify our claims, including: (1) Wedge is competitive (often superior) to Diamond; (2) dWedge provides better accuracy than both Wedge and Greedy-MIPS; and (3) dWedge is more flexible and often superior to competitive LSH-based solvers~\cite{Neyshabur15,Yan18} for the budgeted MIPS.

%We first show that dWedge outperform both Wedgethe empirical evaluation to compare GreedySam and the alias method on wedge and diamond sampling.
%Then we show a comparison between wedge sampling and diamond sampling for top-$k$ MIPS regarding efficiency and accuracy to confirm our theoretical findings.
%We also evaluate the order preserving transformation to map a general MIPS to a non-negative MIPS on these both sampling schemes.
%Finally, we compare the performance of \emph{fdsWedge} with other MIPS solvers, including Greedy~\cite{Yu17} as the state-of-the-art budgeted top-$k$ MIPS, simpleLSH~\cite{Neyshabur15} and rangeLSH~\cite{Yan18} as representatives of LSH-based solvers and FEXIPRO~\cite{FEXIPRO} as an exact top-$k$ MIPS on standard recommender system data sets.
For measuring the accuracy and efficiency, we used the standard Precision@10 and the speedup over the brute-force algorithm, defined as follows.
\begin{align*}
\textrm{Precision@10} &= |\textrm{Retrieved top-10 } \cap \textrm{ True top-10}|/10 \, , \\
\textrm{Speedup} &= \textrm{Running time of brute-force }/\textrm{ Running time of algorithm.}
\end{align*}
%
%$Precision@k$ is a standard measurement used on recommender system~\cite{Cremonesi10}.
%Its value for an exact solution is~1 and for approximate approaches is typically less than 1.
%All results of randomized algorithms were over~5 runs.

\subsection{Experiment Setup and Data Sets}

Since diamond exploits wedge, applying dWedge to diamond derives a new variant, called dDiamond\footnote{This variant is not deterministic due to the randomness from the basic sampling.}.
The list of all implemented algorithms includes: (1) The traditional wedge (\emph{Wedge}) and diamond (\emph{Diamond}) sampling and the proposed solutions \emph{dWedge} and \emph{dDiamond}; (2) The Greedy-MIPS approach (\emph{Greedy})~\cite{Yu17}; (3) Representative LSH-based solutions, including \emph{SimpleLSH}~\cite{Neyshabur15} and the recently improved algorithm \emph{RangeLSH}~\cite{Yan18}; and (4) Brute-force algorithm with the Eigen-3.3.4 library\footnote{\url{http://eigen.tuxfamily.org/index.php?title=Main_Page}} for the extremely fast C++ matrix-vector multiplication. 

We conduct experiments on standard real-world data sets\footnote{\url{ https://drive.google.com/drive/folders/1BHpiaii6Ur0rKSy5c9AFVwAhfLUQMXFE}}, including Netflix, Yahoo, and Gist.
For the sake of comparison, we use the Netflix-200 ($n = 17,770; d = 200$) from~\cite{Yu17}, Netflix-300 ($n = 17,770; d = 300$) and Yahoo ($n = 624,961; d = 300$) from~\cite{Cremonesi10}, and Gist ($n = 1,000,000; d = 960$).
%We note that the two Netflix versions were generated by different matrix factorization tools.
For Netflix and Yahoo, the item matrices are used as the data points.
We randomly pick 1000 users from the user matrices to form the query sets.
%For Gist, we extracted randomly 1000 query points from the query set.
All randomized results are the average of 5 runs of the algorithms.

\subsection{Comparison between Wedge and Diamond Schemes}

\begin{figure} [!t]
	\centering
	%\raggedleft
	\includegraphics[width=\textwidth]{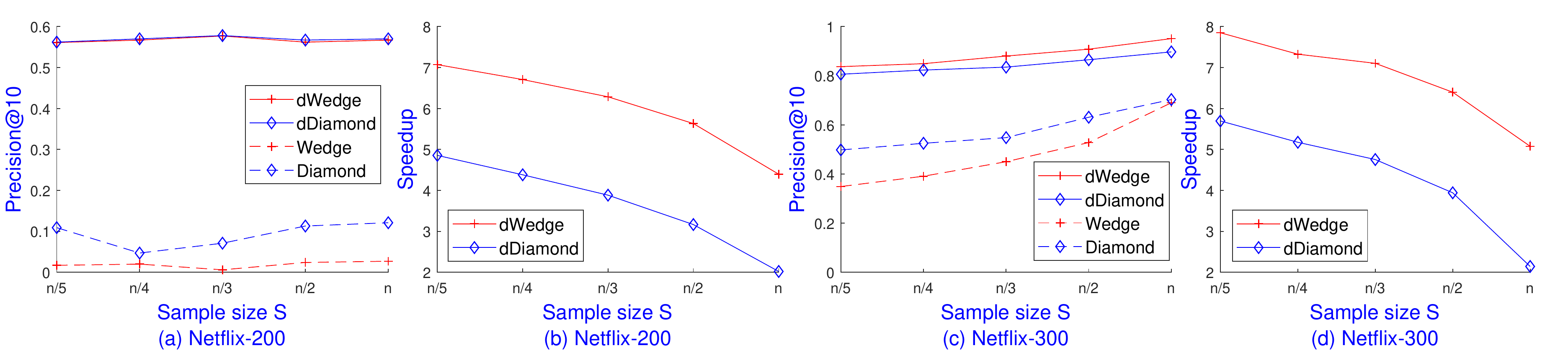}
	\caption{Comparison of accuracy and speedup between dWedge, dDiamond, Wedge and Diamond on Netflix-200 and Netflix-300 when fixing $B = 100$ and varying $S$.}
	\label{fig:wedge_diamond}
\end{figure}

In this subsection, we compare the top-$k$ MIPS performance on Netflix between wedge-based and diamond-based schemes using deterministic and randomized generators.
For each sampling scheme, we consider two corresponding variants, including Wedge, Diamond, dWedge and dDiamond.
We measure their performance on Precision@10 value and speedup over the brute-force search where we varied the sample size $S$ and fix $B = 100$ for post-processing.
%The parameter setting was the same as described in the previous subsection, i.e $s_d = s_w / 2 = \mB / 4$, for the sake of comparison. 

Figure~\ref{fig:wedge_diamond} reveals that the proposed deterministic generator gives higher accuracy than the randomized one.
dWedge and dDiamond outperform Wedge and Diamond, respectively, over a wide range of $S$.
Especially, on Netflix-200, Wedge and Diamond suffer very low accuracy whereas dWedge and dDiamond return more than 55\%.
%While dWedge and dDiamond achieve similar accuracy on Netflix-200, 
Netflix-300 shows the superiority of both dWedge and dDiamond with the accuracy at least 80\%.
We note that the two Netflix versions are generated by different matrix factorization tools.
On Netflix-300, the true top-$k$ points tend to dominate the others on each dimension, hence dWedge and dDiamond achieve higher accuracy than on Netflix-200.

In term of speedup, dWedge runs significantly faster than dDiamond, especially at least twice faster when $S = n$.
Wedge variants run faster than diamond variants since diamond requires the basic sampling step which requires expensive cost for random accesses.
This gap will be more substantial on large data sets but not reported here due to the lack of space.

\subsection{Comparison between dWedge and Greedy-MIPS}

In this subsection, we compare the top-$k$ MIPS performance between dWedge and Greedy on all used data sets.
Since the cost of the screening phase of Greedy is implicitly governed by the number of inner product computation $B_g$ in the ranking phase, the larger $B_g$ provides the higher accuracy for Greedy.
%Larger $B$ gives higher accuracy.
%Greedy with larger inner product computation gives higher accuracy. 
%Since dWedge is very simple and the overhead cost is negligible, we can easily select the parameter $S$ and $B$ such that dWedge and Greedy share the same speedup.
%We observe that for each selected $\bx_i$, dWedge executes two operations: computing the number of samples and updating the counting histogram.
%In order words, dWedge's sampling cost is approximately the cost of computing $2S/d$ inner products.
%Assuming that screening phase of Greedy has negligible cost, we can increase $B$ inner products for Greedy based on the number of samples $S$. 
Note that dWedge's cost is about $2S/d + B$ inner product computation. %, we can set $2S/d + B$ inner product computation for Greedy.
%For example, we chose $S = 10,000$ samples on Netflix-200.
%This cost would be approximately to the cost of computing $2S/d = 100$ additional inner product computation in the post-processing phase.
Since we want to show that dWedge always runs as fast as Greedy but achieves higher accuracy, we often set $B_g > 2S/d + B$ inner product computation for Greedy.
In particular, for small Netflix-200 and Netflix-300, we use $B_g = 2S/d + B + 50$ and $B_g = 2S/d + B + 20$, respectively.
For the large Yahoo data set, we use $B_g = B$ due to the significant cost of the screening phase of Greedy.

\begin{figure} [!t]
	\centering
	%\raggedleft
	\includegraphics[width=\textwidth]{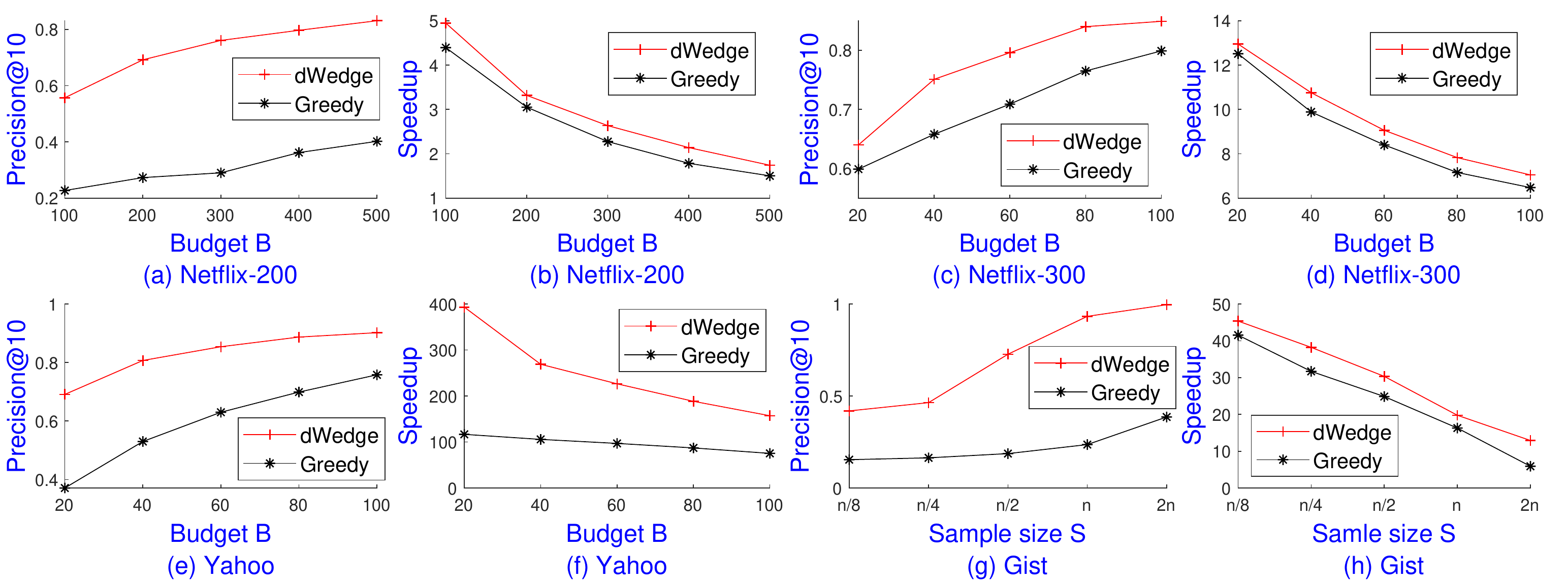}
	\caption{Comparison of accuracy and speedup between dWedge and Greedy when fixing $S=10,000$ and varying $B$ on Netflix-200 (a, b); fixing $S = 4,500$ and varying $B$ on Netflix-300 (c, d); fixing $S = 4,500$ and varying $B$ on Yahoo (e, f); and fixing $B = 200$ and varying $S$ on Gist (g, h).}
	\label{fig:wedge_greedy}
\end{figure}

Figure~\ref{fig:wedge_greedy} (a--d) shows that, given our setting above, dWedge runs slightly faster and provides dramatically higher accuracy than Greedy on the two Netflix versions.
dWedge achieves up to 80\% Precision@10 on both versions while Greedy bears at most 40\% on Netflix-200.
On Yahoo in Figure~\ref{fig:wedge_greedy} (e, f), since the screening phase of Greedy incurs remarkable cost, dWedge yields substantially higher accuracy and speedup.
We note that the Eigen's batch inner product computation runs nearly 20 times faster than a standard implementation on our machine.
Hence, using our cost model, dWedge's speedup can be estimated as $n/(40S/d + 20B)$.
%For example, on Netflix-200 with $n = 17,770, S = 10,000, d = 200$, Figure~\ref{fig:wedge_greedy} (b) with $B = \{100, \ldots, 500\}$, we can estimate the speedup as 

In order to show the benefit of the sampling phase of dWedge, we measure its performance on Gist while fixing $B = 200$ and varying $S$.
Again for Greedy, we add additional inner product computation into the post-processing phase so that dWedge and Greedy achieve the similar speedup, as can be seen in Figure~\ref{fig:wedge_greedy} (h).
Figure~\ref{fig:wedge_greedy} (g) shows that Greedy suffers from very small accuracy with at most 40\% while dWedge achieves nearly perfect recall, i.e. 99\% when $S = 2n$.
In general, dWedge with two parameters, i.e. number of samples $S$ and number of post-processing inner products $B$, governs the trade-off between search efficiency and quality more efficiently than Greedy.
Particularly, dWedge offers a least 80\% Precision@10 on all 4 data sets with remarkable speedups while Greedy does not.

\subsection{Comparison between dWedge and LSH-based Schemes}

This subsection shows experiments comparing the performance between dWedge and LSH-based schemes, including SimpleLSH~\cite{Neyshabur15} and RangeLSH~\cite{Yan18} on Yahoo and Gist.
%Since Netflix is rather small and LSH does not show considerable efficiency, we evaluated on Yahoo and Gist.
There are two strategies of using LSH for top-$k$ MIPS: constructing binary codes for efficient estimation and constructing hash tables for efficient lookups.
We consider both strategies in our experiments.

SimpleLSH~\cite{Neyshabur15} transformed data $\bx \mapsto \{\bx/m, \sqrt{1 - \|\bx\|_2/m} \}$ and query $\bq \mapsto \{\bq/\|\bq\|_2, 0\}$ where $m$ is the maximum 2-norm value of all data points.
RangeLSH~\cite{Yan18} partitions the data set into several partitions and applies SimpleLSH on each partition.
%The data points use the maximum 2-norm of their partition for SimpleLSH transformation.
We show empirically that the MIPS values in the transformation space are much smaller and therefore degrades the efficiency of both LSH schemes. 

% has been proved to improve the efficiency of LSH for approximate near neighbor search~\cite{Andoni15}.
%Recently, it has been exploited to improve the efficiency of LSH on solving MIPS~\cite{Huang18,Yan18}.
%RangeLSH~\cite{Yan18} as a representative LSH-based solution for MIPS.

\textbf{LSH estimation: } For the first strategy, the LSH time complexity is dominated by the hash evaluation, i.e. $\BO{dh}$, and the linear cost of Hamming distance computation, i.e. $\BO{nh}$, where $h$ is the code length.
In practice, we can use the Eigen library with fast matrix-vector multiplication to speed up the hash computation.
Furthermore, the Hamming distance computation is also fast due to the \texttt{\_builtin\_popcount} function of GCC compilers.
Hence, LSH-based estimation is still much faster than brute-force search.
To compare dWedge with LSH schemes, we set $S = 6,000 \approx n/100$ for Yahoo and $S = 2n$ for Gist.
Both dWedge and LSH-based methods use $B = 100$ inner products for post-processing.
We evaluate SimpleLSH and RangeLSH over a wide range of code length $h$. 

\begin{figure*} [!t]
	\centering
	%\raggedleft
	\includegraphics[width=1.0\textwidth]{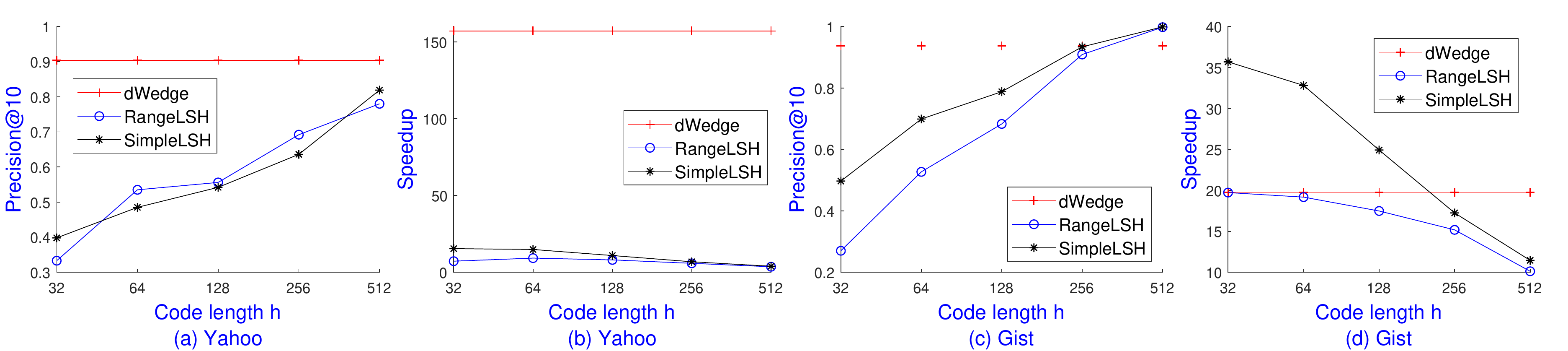}
	\caption{Comparison of accuracy and speedup between dWedge ($S = n/100$ for Yahoo, $S = 2n$ for Gist), SimpleLSH and RangeLSH when fixing $B = 100$ and varying $h$.}
	\label{fig:wedge_lshcode}
\end{figure*}

Figure~\ref{fig:wedge_lshcode} shows that dWedge is superior to LSH schemes on these two data sets.
On Yahoo, dWedge achieves 90\% Precision@10 with approximate 180x speedup whereas the LSH schemes give at most 80\% with negligible speedup when $h = 512$.
On Gist, LSH schemes only achieve slightly higher accuracy but run twice slower than dWedge when $h = 512$.
After transformation, both the MIPS values and the gap between ``close'' and ``far apart'' points are often very small.
That requires significantly large code length, i.e. $h = 512$, to achieve acceptable accuracy.
For the other values of $h$, dWedge always returns higher accuracy and runs dramatically faster than RangeLSH.
For the same speedup setting, dWedge still provides higher accuracy than SimpleLSH but not reported here due to the lack of space.
We note that dWedge provides significant speedup on Yahoo compared to Gist because of the much smaller number of samples $S$.

In our experiment, SimpleLSH is competitive with RangeLSH.
%This is because RangeLSH's estimate has two components\footnote{RangeLSH estimates $\bx \cdot \bq$ as $m_i \cos{\pi (1 - p)}$ where $p$ is the collision probability.}.
%The first one is the maximum 2-norm value of each partition and the second one is derived from the collision probability.
%While the first one is always non-negative, the second one might be negative.
%A small error induced from LSH functions will change the sign of the second component and therefore change completely the results.
RangeLSH roughly estimates $\bx \cdot \bq$ as $m_i \cos{\pi (1 - p)}$ where $p$ is the collision probability and $m_i$ is the maximum 2-norm of the partition of $\bx$.
%The first one is the maximum 2-norm value of each partition and the second one is derived from the collision probability.
While $m_i$ is always non-negative, $\cos{\pi (1 - p)}$ might be negative.
A small error induced from LSH functions will change the sign of $\cos{\pi (1 - p)}$ and therefore change the results completely.
On the other hand, SimpleLSH does not suffer this issue.
Since the comparison of LSH schemes is out of the scope, we do not discuss it in more detail.

\textbf{LSH lookups: } For the second strategy, since we do not know the collision probability between the query and all data points, it is impossible to tune LSH parameters for any budget cost.
Given a fixed number of hash tables $L$, the number of concatenating hash functions $h$ will control the number of collisions and therefore the candidate set size. 
In particular, large values of $h$ will decrease the collision probability and hence candidate set size and vice versa.
The query complexity will be dominated by $\BO{dhL + dB}$ where the first term is from the hash computation and the second term comes from the post-processing phase of computing $B$ inner products.
Again, we used the Eigen library for speeding up the hash computation.

\begin{table*}[!t]
		\caption{Comparison of accuracy and running time of the screening  and ranking phases between dWedge and SimpleLSH on Yahoo when fixing $B = 40$ and varying $h$.}
	\centering
	\begin{tabular}{|c|c|c|c|c|c|} 
		\hline
		Method 
		& \begin{tabular}{@{}c@{}}dWedge \\ $S$ = 6,000\end{tabular} 
		& \begin{tabular}{@{}c@{}}SimpleLSH \\ $h$ = 8\end{tabular} 
		& \begin{tabular}{@{}c@{}}SimpleLSH \\ $h$ = 10\end{tabular}
		& \begin{tabular}{@{}c@{}}SimpleLSH \\ $h$ = 12\end{tabular} 
		& \begin{tabular}{@{}c@{}}SimpleLSH \\ $h$ = 16\end{tabular}\\ 
		\hline
		Screening time (ms) & \textbf{0.158} & 0.342 & 0.444 & 0.649 & 0.998\\ 
		\hline
		Ranking time (ms) & \textbf{0.144} & 0.255 & 0.224 & 0.242 & 0.043\\ 
		\hline
		Total time (ms) & \textbf{0.312} & 0.646 & 0.731 & 0.984 & 1.171\\ 
		\hline
		%		Speedup & \textbf{270x} & 78x & 44x & 38x & 36x & 30x\\ 
		%		\hline
		Precision@10 & \textbf{82.4\%} & 20.4\% & 21.6\% & 21.8\% & 5.3\% \\
		\hline
	\end{tabular}
	\label{tab:wedge_lshtable}
\end{table*}

%It is worth noting that the performance of standard LSH~\cite{Andoni08} depends on the two parameters, including the number of concatenated hash functions $K$ and the number of hash tables $L$.
Since we need to set up these parameters before querying, LSH provides a fixed trade-off between search efficiency and quality.
To compare with dWedge, we first fix $L = 512$ as suggested in previous LSH-based MIPS solvers~\cite{Neyshabur15,Shrivastava14} and vary the parameter $h$.
Due to the similar results\footnote{RangeLSH provides slightly better than SimpleLSH with small $h$ values. When $h \geq 20$, they will share the same results of very low accuracy on both data sets.}, we report the representative results of SimpleLSH on Yahoo.
As explained in Section 4.3, the screening cost of dWedge with $S$ samples is similar to the cost of computing $2S/d$ inner products.
Hence, dWedge with $S = 6,000 \approx n/100$ and $B = 40$ balances the screening time and ranking time on Yahoo with $d = 300$.
We also set $B = 40$ as the candidate size of SimpleLSH for the sake of comparison.
% and will show that it is impossible to choose the best $B$ values for any query. 
%Indeed, it is impossible to choose the best $B$ values for any query due to the different distance distribution between each query and data points.
%We will demonstrate this observation in our experiment later.

Table~\ref{tab:wedge_lshtable} shows the comparison between dWedge and SimpleLSH in several components, including the screening time, ranking time, total running time, and accuracy.
The screening time of dWedge and SimpleLSH are the sampling and hash evaluation time, respectively.
It is clear that the screening time and ranking time dominate the total running time.
%There are also other negligible time overhead of each method, for example, searching top-$B$ candidates for dWedge and removing duplicates for SimpleLSH. 
%as we can see from Table~\ref{tab:wedge_lshtable}.
Furthermore, dWedge has not only similar screening and ranking time but also negligible overheads, which expresses the simplicity of dWedge on tuning parameters.
Given such a balanced setting, dWedge not only runs at least twice faster but also achieves 82.4\% accuracy, which is significantly higher than the maximum 21.8\% of SimpleLSH with $h = 12$.

Regarding the screening time, dWedge runs at least twice faster than SimpleLSH with $h = 8$.
Regarding the ranking time, dWedge is also faster than SimpleLSH, except for the case $h = 16$, though we use the same $B = 40$.
dWedge chooses the top-$B$ with the largest counter values hence has a higher chance to retrieve top-$k$ MIPS earlier. 
This significantly reduces the number of insertions into the priority queue.
On the other hand, such a chance of SimpleLSH is uniform, hence SimpleLSH has no such benefit.

For SimpleLSH, the screening time increases due to the increase of the hash evaluation time.
However, the ranking time of SimpleLSH is similar when $h < 12$ and then drops to a negligible amount at $h = 16$.
Similarly, Precision@10 slightly increases when $h$ increases and then significantly drops at $h = 16$.
This is due to the fact that SimpleLSH has enough candidates (i.e. $B = 40$) when $h \leq 12$.
However, it does not have enough collisions on all $L$ hash tables when $h = 16$.
We observe that the average value of $B$ is just 6 for all queries.
This fact is not surprising. 
A simple computation indicates that the average maximum MIPS is 0.24 after transformation.
Hence, the number of collisions is at most $n(1 - (1 - 0.24^h)^L) \approx 0$ when $h = 16$ and $L = 512$.
We note that this observation is also true with RangeLSH.
Due to the lack of space, we will not discuss RangeLSH's performance here.
%Indeed, RangeLSH gives slightly better accuracy than SimpleLSH when $h = 8$.
%Unfortunately, increasing $h$ does not make RangeLSH superior to SimpleLSH.
%This is because 
In general, LSH with both strategies is inferior to dWedge on our benchmark data sets. 
LSH schemes cannot govern effectively the trade-off between search quality and efficiency as dWedge.

\section{Conclusions}\label{sec:conclusion}

This paper studies top-$k$ MIPS given a limited budget of computational operations and investigates recent advanced sampling algorithms, including wedge and diamond sampling to solve it.
We theoretically and empirically show that wedge sampling is competitive (often superior) to diamond sampling on approximating top-$k$ MIPS regarding both efficiency and accuracy.

We propose dWedge, a very simple deterministic wedge variant for budgeted top-$k$ MIPS.
dWedge runs significantly faster or provides higher accuracy than other competitive budgeted MIPS solvers while maintaining an accuracy of at least 80\% on our real-world benchmark data sets.

%\end{document}  % This is where a 'short' article might terminate

%ACKNOWLEDGMENTS are optional

%\section{Acknowledgments}
%We would like to thank Rasmus Pagh for useful discussion and comments in the early stage of this work.

%
% The following two commands are all you need in the
% initial runs of your .tex file to
% produce the bibliography for the citations in your paper.

%\vspace{-1mm}

%\bibliographystyle{abbrv}
%\balance
%\bibliography{sigproc}

  % sigproc.bib is the name of the Bibliography in this case
% You must have a proper ".bib" file
%  and remember to run:
% latex bibtex latex latex
% to resolve all references
%
% ACM needs 'a single self-contained file'!
%
%APPENDICES are optional
%\balancecolumns

%\balancecolumns % GM June 2007
% That's all folks!
\end{document}